\def\be{\begin{equation}}
	\def\ee{\end{equation}}
\def\ba{\begin{array}}
	\def\ea{\end{array}}

\documentclass[prl,showpacs,twocolumn,amsmath]{revtex4}
\usepackage{amsmath}
\usepackage{amsfonts}
\usepackage{mathrsfs}
\usepackage{amssymb}
\usepackage{pifont}
\usepackage{epsfig,subfigure,dsfont,amsthm,amsbsy,mathrsfs,amscd}
\usepackage{epstopdf}
\usepackage{bbm}
\usepackage{braket}
\usepackage{caption}
\graphicspath{{photography/}}{\large }
\usepackage{graphicx}
\usepackage{pgfplots}
\usepackage{booktabs}
\usepackage{makecell}
\def\qed{\leavevmode\unskip\penalty9999 \hbox{}\nobreak\hfill
	\quad\hbox{\leavevmode  \hbox to.77778em{%
			\hfil\vrule   \vbox to.675em%
			{\hrule width.6em\vfil\hrule}\vrule\hfil}}
	\par\vskip3pt}
\usepackage{leftidx}
\newtheorem{thm}{\bf Theorem}

\input amssym.def
\begin{document}
	\title{\large\bf A note on  entanglement detection via the generalized realignment moments} 
	\author{Xiaofen Huang$^{1, \dag}$, Xishun Zhu$^{1}$,  Bin Chen$^{2}$, Naihuan Jing$^{3}$ and  Shao-Ming Fei$^{4}$ }
	\affiliation{ ${1}$ School of Mathematics and Statistics, Hainan Normal University, Haikou, 571158, China \\
		$2$ School of Mathematical Sciences, Tianjin Normal University, Tianjin 300387, China \\
        $3$ Department of Mathematics, North Carolina State University, Raleigh, NC 27695, USA \\
        $4$ School of Mathematics Sciences, Capital Normal University, Beijing100048, China \\
		$^{\dag}$ Correspondence to huangxf1206@163.com}
	
	\bigskip
	\bigskip

\begin{abstract}
The experimental detection of quantum entanglement is of great importance in quantum information processing. We present two separability criteria based on the generalized realignment moments. By incorporating additional parameters, these criteria prove to be more flexible and stronger than some of existing ones. Detailed examples are given to demonstrate their availability and feasibility for entanglement detection.
\end{abstract}

\pacs{04.70.Dy, 03.65.Ud, 04.62.+v}
\maketitle
	
\section{I. Introduction}
Quantum entanglement raised the cental problem in the EPR argument \cite{EPR} to demonstrate the incompleteness of quantum mechanics. As an essential resource in quantum information science, it enables applications including quantum algorithms \cite{chuang, algo}, quantum cryptography \cite{cryp1, cryp2}, quantum simulation \cite{simula} and quantum teleportation \cite{telep}. A bipartite quantum state $\rho$  associated with the Hilbert space $\mathcal{H}_A\otimes \mathcal{H}_B$ is said to be separable if it has a convex representation of product states,
\begin{equation}
\rho=\sum_{i}p_i\rho_{i}^{A}\otimes \rho_i^{B},
\end{equation}
where $p_i$ is a probability distribution,  $\rho_{i}^{A}$  and $ \rho_i^{B}$ are the density matrices of the subsystems $\mathcal{H}_A$ and $\mathcal{H}_B$, respectively. Otherwise $\rho$ is entangled.

Detecting whether a quantum state is entangled or not is a fundamental problem in the theory of quantum entanglement. The celebrated PPT (positive partial transpose) criterion says that the partially transposed matrix of any separable state must be semipositive \cite{PPT}. The PPT criterion is both necessary and sufficient for $2\otimes 2$ and $2\otimes 3$ quantum systems \cite{iso}, but not sufficient for high-dimensional states, as the bound entangled states are PPT states. The realignment criteria \cite{realig} detects the entanglement of many bound entangled states. In Refs.\cite{rea1, rea2} the authors improved the existing realignment criterion by analyzing the symmetric functions of the singular values of the realigned matrices. The authors in Ref. \cite{rea3} established a few entanglement criterion for two-qubit and two-qudit systems based on the realignment operation. In Ref. \cite{rea4}, realignment separability criterion assisted with filtration was proposed for detecting the continuous-variable entanglement.

Recently some further elegant results on separability criteria have been derived \cite{nega, cova, guhne, prl125, prl127, zhang2008}. Elben \textit{et al.}\cite{prl125} proposed a separability criterion based on moments given by the partially transposed matrix. Then more separability criteria based on the Hankel matrices involving all the PT-moments \cite{prl127},  the realignment moments \cite{zhang2022,agg2024,wang2024} and the moments with respect to positive maps \cite{wang2022, zhang2025}. In Ref. \cite{adh2024} the authors introduced an entanglement criterion by estimating the bound of realignment moments, which detects both non-PPT entangled states and bound entangled states. The authors in Ref. \cite{zhao2025} proposed entanglement criteria for multipartite entanglement based on realignment moments. In Ref. \cite{moment1}, an efficient detection of genuine multipartite entanglement has been presented by using the moments of positive maps.

In this paper, we proposed two entanglement criteria based on the moments of the generalized realignment matrix with parameters. Detailed examples are given to illustrate the power, effectiveness and advantages of our approach in detecting bipartite  entanglement.
	
\section{II. Generalized realignment criterion with parameters based on moments}
We first review some concepts about the realignment. Let $\mathbb{ C}^{m\times n}$ denote the set of all $m\times n$ matrices over complex field $\mathbb{ C}$, and $\mathbb{ R}$ the field of real numbers.
For a matrix $A=(a_{i, j})\in \mathbb{ C}^{m\times n}$, the vectorization of matrix $A$ is defined as $\mathrm{Vec}(A)=(a_{11}, \dots, a_{m1}, a_{12}, \dots, a_{m2}, \dots, a_{1n}, \dots, a_{mn})^T$, where $T$ stands for the transpose. Let $Z$ be an $m\times m$ block matrix with sub-blocks $Z_{i, j}\in\mathbb{ C}^{n\times n}$, $i, j=1, 2, \dots, m$. The realigned matrix $\mathcal{R}(Z)$ of $Z$ is defined by
$$
\mathcal{R}(Z)=\begin{pmatrix}
\mathrm{Vec}(Z_{1, 1})^T\\
\vdots\\
\mathrm{Vec}(Z_{m, 1})^T\\
\vdots\\
\mathrm{Vec}(Z_{1, m})^T\\
\vdots\\
\mathrm{Vec}(Z_{m, m})^T\\
\end{pmatrix}.
$$

The well-known realignment criterion \cite{realig} says that any separable state $\rho$ in $\mathbb{C}^{d_A}\otimes \mathbb{C}^{d_B}$ satisfies $\|\mathcal{R}(\rho)\|_{\mathrm{tr}}\leq 1$, where $\|A\|_{\mathrm{tr}}=\mathrm{tr}(\sqrt{A^{\dagger}A})$ is the trace norm of $A$. In  \cite {zhang2008} Zhang \textit{et al.} showed that for any separable state $\rho_{AB}$, the following inequality holds,
\begin{equation}
\|\mathcal{R}(\rho_{AB}-\rho_A\otimes \rho_B) \|_{\mathrm{tr}}\leq \sqrt {1-\mathrm{tr} \rho_A^2}\sqrt {1\mathrm{-tr} \rho_B^2}.
\end{equation}

Furthermore, Shen \textit{et al.} \cite{shen2015} considered the block matrix,
\begin{equation}
	\mathcal{N}_{\alpha, l}^G(\rho)=
	\begin{pmatrix} G-\alpha^2E_{l\times l} & \alpha_{\omega_l}(\rho_B)^T\\
		\alpha_{\omega_l}(\rho_A) & \mathcal{R}(\rho)
	  \end{pmatrix},
\end{equation}
where $E_{l\times l}$ is an $l\times l$ matrix with all the entries being $1$, $G$ is a matrix such that $G-\alpha^2E_{l\times l}$ is positive semidefinte, $\beta\in \mathbb{R}$ and $\omega_l(X)$ is defined by
\begin{equation}
	\omega_l(X)=\underbrace{(\textrm{Vec}(X), \cdots, \textrm{Vec}(X) )}_{l \text{ columns}}.
\end{equation}
It is shown that if $\rho$ is a separable, then
\begin{equation}
\|\mathcal{N}_{\alpha, l}^G(\rho) \|_\mathrm{{tr}}\leq 1+\mathrm{tr}(G).	
\end{equation}

Recently, Shi \textit{et al.} \cite{shi2023} constructed the following matrix,
\begin{equation}
\mathcal{M}_{\alpha, \beta}(\rho)=\begin{pmatrix}\alpha \beta  & \alpha \textrm{Vec}(\rho_B)^T\\
	\beta \textrm{Vec}(\rho_A) & \mathcal{R}(\rho)
\end{pmatrix},
\end{equation}
where $\alpha, \beta \in\mathbb{R}$, $\rho_A$ and $\rho_B$ are the reduced density matrices of the $A$ and $B$ subsystems, respectively. They showed that if $\rho$ is separable, the following inequality holds,
\begin{equation}
\|\mathcal{M}_{\alpha, \beta}(\rho)\|_\mathrm{{tr}}\leq \sqrt{(\alpha^2+1)(\beta^2+1)}.	
\end{equation}
Sun \textit{et al.} \cite{sun2024} generalized this criterion 
by constructing the following block matrix $\mathcal{M}^l_{\alpha, \beta}(\rho)$,
\begin{equation}
	\mathcal{M}^l_{\alpha, \beta}(\rho)=\begin{pmatrix}\alpha \beta E_{l\times l}  & \alpha \omega_l(\rho_B)^T\\
		\beta\omega_l(\rho_A) & \mathcal{R}(\rho)
	\end{pmatrix},
\end{equation}
where $\alpha$ and $\beta$ are arbitrary real numbers, $l$ is a natural number. They showed that if $\rho$ is separable, then 
\begin{equation}
	\|\mathcal{M}^l_{\alpha, \beta}(\rho)\|_\mathrm{{tr}}\leq \sqrt{(l\alpha^2+1)(l\beta^2+1)}.	
\end{equation}

We present separability criteria based on the moments of the generalized realignment matrix $\mathcal{M}^l_{\alpha, \beta}(\rho)$ by using the H\"older inequality \cite{Hold} and Schatten-$p$ norm \cite{schatten} of matrices. We define the moments of generalized realignment matrix,
\begin{equation}
a_k: ={\rm{tr}}[\mathcal{M}^l_{\alpha, \beta}(\rho)\mathcal{M}^l_{\alpha, \beta}(\rho)^{\dag}]^{\frac{k}{2}},
\end{equation}
where $k=0, 1, \cdots, (d_A^2-1)(d_B^2-1)$.
For convenience we set $a_0=(d_A^2-1)(d_B^2-1)$ and denote $\textbf{a}=(a_0, a_1, \cdots, a_{(d_A^2-1)(d_B^2-1)})$.
We have the following conclusion.

\begin{thm}\label{thm1}
 For a separable bipartite quantum state $\rho$, the following inequality holds,
\begin{equation}\label{bi1}
a_2^2\leq \sqrt{(l\alpha^2+1)(l\beta^2+1)}a_3,
\end{equation}
where $\alpha$ and $\beta$ are arbitrary real numbers, $l$ is a natural number.
\end{thm}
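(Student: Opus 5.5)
The plan is to combine the separability bound of Sun \textit{et al.}, $\|\mathcal{M}^l_{\alpha,\beta}(\rho)\|_{\mathrm{tr}}\leq \sqrt{(l\alpha^2+1)(l\beta^2+1)}$, with a H\"older-type interpolation among the Schatten norms of $M:=\mathcal{M}^l_{\alpha,\beta}(\rho)$.

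Let $\sigma_1,\dots,\sigma_r\ge 0$ be the singular values of $M$. Then, by definition,
\begin{equation}
a_k=\mathrm{tr}\,(MM^{\dagger})^{k/2}=\sum_i \sigma_i^k .
\end{equation}
In particular $a_1=\|M\|_{\mathrm{tr}}$, $a_2=\sum_i\sigma_i^2$ and $a_3=\sum_i\sigma_i^3$. Note that $a_0$ is fixed by convention and plays no role here.

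The key step is the inequality $a_2^2\leq a_1a_3$. To obtain it, write $\sigma_i^2=\sigma_i^{1/2}\cdot\sigma_i^{3/2}$ and apply the Cauchy--Schwarz inequality, which is the H\"older inequality with $p=q=2$:
\begin{equation}
\Big(\sum_i\sigma_i^2\Big)^2\leq \Big(\sum_i\sigma_i\Big)\Big(\sum_i\sigma_i^3\Big).
\end{equation}
Equivalently, this is the log-convexity of $k\mapsto \|M\|_k^k$, the same fact that underlies the Hankel-matrix moment criteria.

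For a separable $\rho$, the bound of Sun \textit{et al.} gives $a_1\leq \sqrt{(l\alpha^2+1)(l\beta^2+1)}$. Since $a_3\geq 0$, we may multiply this bound by $a_3$, and combining with the previous step yields
\begin{equation}
a_2^2\leq a_1a_3\leq \sqrt{(l\alpha^2+1)(l\beta^2+1)}\,a_3 ,
\end{equation}
which is the claim. All steps are elementary. The only point needing care is the identification of $a_k$ with $\sum_i\sigma_i^k$, which requires interpreting $[MM^{\dagger}]^{k/2}$ as the positive matrix power. There is no genuine obstacle: the substantive input is the earlier trace-norm bound, which we take as given.
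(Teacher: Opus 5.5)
Your proposal is correct and matches the paper's proof: the paper also applies H\"older's inequality (with $p=q=2$, i.e.\ Cauchy--Schwarz) to $u_i=s_i^{1/2}$, $v_i=s_i^{3/2}$ to get $a_2^2\leq a_1a_3$, and then substitutes the trace-norm bound $a_1\leq\sqrt{(l\alpha^2+1)(l\beta^2+1)}$ of Sun \textit{et al.} Your remark that $a_3\geq 0$ is needed to multiply that bound through is a small point the paper leaves implicit.
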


\textbf{Proof}
For $n$-dimensional vectors $u=(u_i)$ and $v=(v_i)$, the H$\rm{\ddot{o}}$lder inequality says that
\begin{equation}
|\langle u, v\rangle|=\sum_{i=1}^n u_iv_i\leq \| u\|_{l_p}\|v\|_{l_q},
\end{equation}
where $p, q\geq 1$,  $\frac{1}{p}+\frac{1}{q}=1$,  and $\|u\|_{l_p}:=(\sum_i|u_i|^p)^{\frac{1}{p}}$ is the $l_p$ norm of $u$.
The Schatten-$p$ norm of a Hermitian matrix $M$ is given by
\begin{equation}
\|M\|_p=\Big({\rm{tr}}(MM^{\dag})^{\frac{p}{2}}\Big)^{\frac{1}{p}}=(\sum_i s_i^p)^{\frac{1}{p}},
\end{equation}
where $s_i$'s are the singular values of the matrix $M$. Note that $\|M\|_1$ reduces to the trace norm $\|M\|_{\textrm{tr}}$ of $M$.

By definition of the Schatten-$p$ norm, one has that $\|M\|_2^4=(\sum_{i}s_i^2)^2$, $\|M\|_1=\sum_is_i$, and $\|M\|_3^3=\sum_is_i^3$. Let $u_i=s_i^{\frac{1}{2}}$ and $v_i=s_i^{\frac{3}{2}}$ in the H\"older inequality, then
\begin{equation}
\sum_{i}s_i^2=\sum_i s_i^{\frac{1}{2}}s_i^{\frac{3}{2}}\leq (\sum_is_i)^{\frac{1}{2}}(\sum_is_i^3)^{\frac{1}{2}},
\end{equation}
i. e.,
\begin{equation}
(\sum_{i}s_i^2)^2\leq (\sum_is_i)(\sum_is_i^3).
\end{equation}
Therefore, for any Hermitian matrix $M$ one obtains
\begin{equation}\label{bi3}
\|M\|_2^4\leq \|M\|_1 \|M\|_3^3.
\end{equation}

Take $M=[\mathcal{M}^l_{\alpha, \beta}(\rho)\mathcal{M}^l_{\alpha, \beta}(\rho)^{\dag}]^{\frac{1}{2}}$. We have
$\|M\|_2^4=a_2^2$, $\|M\|_1=a_1$ and $\|M\|_3^3=a_3$. By inequality (\ref{bi3}), for any bipartite state $\rho$ we have
\begin{equation}\label{bi33}
a_2^2\leq a_1a_3.
\end{equation}
If $\rho$ is separable, the generalized realignment matrix  satisfies $\|\mathcal{M}^l_{\alpha, \beta}(\rho)\|_{\textrm{tr}}={\rm {tr}} (\mathcal{M}^l_{\alpha, \beta}(\rho)\mathcal{M}^l_{\alpha, \beta}(\rho)^{\dag})^\frac{1}{2}=a_1\leq \sqrt{(l\alpha^2+1)(l\beta^2+1)}$ \cite{sun2024}. Substituting it into the inequality (\ref{bi33}), one obtains the inequality (\ref{bi1}).
 $\Box$

The separability criterion in Theorem \ref{thm1} involves the first three moments. To adopt more moments
to detect entanglement, we construct two families of semipositive matrices named the Hankel matrices \cite{prl127}, 
1).  $H_k(\textbf{a})$  with entries $[H_k(\textbf{a})]_{ij}=a_{i+j}$, where $k=1, 2, \cdots, \lfloor \frac{d_Ad_B}{2}\rfloor$ and $i, j=0, 1, \cdots, k$; 2). $B_l(\textbf{a})$  with entries  $[B_l(\textbf{a})]_{mn}=a_{m+n+1}$ for $l=1, 2, \cdots, \lfloor \frac{d_Ad_B-1}{2}\rfloor$ and $m, n=0, 1, 2, \cdots, l$, where $\lfloor \cdot \rfloor$ stands for the integer function. We have the following separability criteria.

\begin{thm}\label{thm2} For any separable bipartite state $\rho$, the Hankel matrices satisfy
the following inequalities: 
\begin{equation}\label{bi4}
\widehat{H_k}(\textbf{a})\geq 0,~~ \widehat{B_l}(\textbf{a})\geq 0,
\end{equation}
where $\widehat{H_k}(\textbf{a})$ and $\widehat{B_l}(\textbf{a})$  are the matrices obtained by replacing $a_1$ with $\sqrt{(l\alpha^2+1)(l\beta^2+1)}$  in the Hankel matrices  $H_k(\textbf{a})$  and $B_l(\textbf{a})$.
\end{thm}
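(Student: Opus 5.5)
The plan is to write each Hankel matrix as a sum of rank-one positive semidefinite matrices built from the singular values of $\mathcal{M}^l_{\alpha,\beta}(\rho)$, and then to track how replacing $a_1$ by the separability bound $S:=\sqrt{(l\alpha^2+1)(l\beta^2+1)}$ changes them. The bound $a_1\le S$ for separable $\rho$ is the result of \cite{sun2024} already used in Theorem \ref{thm1}.

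\textbf{Step 1 (unhatted matrices are PSD).} Let $s_1,\dots,s_r\ge 0$ be the singular values of $\mathcal{M}^l_{\alpha,\beta}(\rho)$. Then $a_k=\sum_i s_i^k$ for $k\ge 1$. For $H_k(\textbf{a})$ I take $v_i=(1,s_i,\dots,s_i^k)^T$, so that
\begin{equation*}
H_k(\textbf{a})=\sum_i v_iv_i^T+(a_0-r)\,e_0e_0^T .
\end{equation*}
For this to be PSD I need $a_0\ge r$, i.e. the convention for $a_0$ must dominate the number of (possibly zero) singular values counted. This has to be checked against the paper's choice $a_0=(d_A^2-1)(d_B^2-1)$. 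For $B_l(\textbf{a})$ I take $w_i=\sqrt{s_i}\,(1,s_i,\dots,s_i^l)^T$, so that $B_l(\textbf{a})=\sum_i w_iw_i^T\ge 0$ with no condition on $a_0$.

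\textbf{Step 2 ($B_l$ case).} In $B_l(\textbf{a})$ the moment $a_1$ occurs only in the $(0,0)$ entry, since $m+n+1=1$ forces $m=n=0$. Hence
\begin{equation*}
\widehat{B_l}(\textbf{a})=B_l(\textbf{a})+(S-a_1)\,e_0e_0^T .
\end{equation*}
For separable $\rho$ we have $S-a_1\ge 0$, so this is a sum of two PSD matrices and $\widehat{B_l}(\textbf{a})\ge 0$ follows immediately.

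\textbf{Step 3 ($H_k$ case, the main obstacle).} Here $a_1$ sits in the off-diagonal entries $(0,1)$ and $(1,0)$, so
\begin{equation*}
\widehat{H_k}(\textbf{a})=H_k(\textbf{a})+(S-a_1)(e_0e_1^T+e_1e_0^T).
\end{equation*}
The added matrix is indefinite, so the monotonicity argument of Step 2 breaks down. Already for $k=1$ one needs $a_0a_2\ge S^2$, which is not implied by $a_0a_2\ge a_1^2$.

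My first attempt is to exploit the slack in the $(0,0)$ entry. Writing $S=a_1+\delta$ with $\delta\ge 0$, I would compute the Schur complement of $\widehat{H_k}(\textbf{a})$ with respect to the block indexed by $1,\dots,k$. I would then ask whether $a_0-r$ (or a bound such as $a_2\ge$ something forced by the normalization of $\mathcal{M}^l_{\alpha,\beta}(\rho)$, e.g. $a_2=\|\mathcal{M}^l_{\alpha,\beta}(\rho)\|_2^2=l^2\alpha^2\beta^2+l\alpha^2{\rm tr}\rho_B^2+l\beta^2{\rm tr}\rho_A^2+{\rm tr}\rho^2$) absorbs the extra term $\delta$.

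If this fails, which I expect is generic, then the correct statement for $H_k$ needs modifying. Either $a_1$ should be replaced only where it enters monotonically, or the criterion should be stated as: $\widehat{H_k}(\textbf{a})\not\ge 0$ detects entanglement only once one shows that $x\mapsto \det$ (the relevant minors with $a_1=x$) is nonincreasing on $[a_1,S]$. So before finalising the $H_k$ part I would test small-dimensional examples numerically, to see whether the claimed inequality can actually fail for a separable state.
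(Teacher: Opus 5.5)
Your Steps 1 and 2 are correct and follow the paper's argument in other notation. The paper writes $H_k(\mathbf a)$ and $B_l(\mathbf a)$ as Gram matrices of $H^i$ and $H^{i+1/2}$, with $H=[\mathcal{M}^l_{\alpha,\beta}(\rho)\mathcal{M}^l_{\alpha,\beta}(\rho)^{\dagger}]^{1/2}$, for the Hilbert--Schmidt inner product; in an eigenbasis of $H$ this is exactly your sum of rank-one terms. Your observation that $a_1$ enters $B_l(\mathbf a)$ only in the $(0,0)$ entry, so that $\widehat{B_l}(\mathbf a)=B_l(\mathbf a)+(S-a_1)e_0e_0^T$, is precisely the justification the paper omits when it says that replacing $a_1$ by $S$ yields the result. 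On $a_0$: the Gram construction actually gives $a_0=\mathrm{tr}\,H^0=l+d_A^2$, not the stated $(d_A^2-1)(d_B^2-1)$. Both values are at least $\operatorname{rank}\mathcal{M}^l_{\alpha,\beta}(\rho)\le 1+\min(d_A^2,d_B^2)$, since its first $l$ rows coincide, as do its first $l$ columns. So your Step 1 condition holds.

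For $H_k$ you can stop looking for a proof. The obstruction you identified is real, the paper's proof passes over it with the same unjustified replacement, and the claim is false for every $k\ge2$.

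Take any product state $\rho=\rho_A\otimes\rho_B$ with a mixed marginal. Then $\mathcal{R}(\rho)=\mathrm{Vec}(\rho_A)\mathrm{Vec}(\rho_B)^T$, so $\mathcal{M}^l_{\alpha,\beta}(\rho)=uw^T$. Here $u$ is the stack of $\alpha\mathbf{1}_l$ over $\mathrm{Vec}(\rho_A)$, $w$ is the stack of $\beta\mathbf{1}_l$ over $\mathrm{Vec}(\rho_B)$, and $\mathbf 1_l$ is the all-ones vector. Its only nonzero singular value is $s=\sqrt{(l\alpha^2+\mathrm{tr}\rho_A^2)(l\beta^2+\mathrm{tr}\rho_B^2)}<S$, so $a_m=s^m$ for $m\ge1$ and
\[
\det\widehat{H_2}(\mathbf a)=\det\begin{pmatrix} a_0 & S & s^2\\ S & s^2 & s^3\\ s^2 & s^3 & s^4\end{pmatrix}=-s^4(S-s)^2<0
\]
whatever $a_0$ is. For example, $\rho=I_4/4$ with $l=1$, $\alpha=\beta=0$ gives $s=1/2$, $S=1$ and $\det\widehat{H_2}(\mathbf a)=-1/64$, with $k=2=\lfloor d_Ad_B/2\rfloor$ inside the theorem's range.

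This also shows why absorbing $\delta=S-a_1$ into the slack of the $(0,0)$ entry must fail in general. Suppose $\mathcal{M}^l_{\alpha,\beta}(\rho)$ has $N<k$ distinct nonzero singular values $\sigma_1,\dots,\sigma_N$. Let $y=(y_1,\dots,y_k)$ be the coefficients of $p(x)=\prod_{m=1}^N(x-\sigma_m)=\sum_{j=1}^k y_jx^{j-1}$. Then:
\begin{itemize}
\item $y$ lies in the kernel of $[a_{i+j}]_{i,j=1}^k$;
\item $\sum_j y_ja_j=\sum_i s_ip(s_i)=0$;
\item $y_1=p(0)\ne0$.
\end{itemize}
So the vector $(t,y_1,\dots,y_k)^T$ gives the quadratic form $a_0t^2+2t\,\delta\,y_1$. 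This is negative for small $t$ of suitable sign, and $a_0$ plays no role.

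Two parts of the theorem survive:
\begin{itemize}
\item $\widehat{B_l}(\mathbf a)\ge0$, which is your Step 2 and contains Theorem 1 as its $2\times2$ case;
\item the case $k=1$, which with the stated $a_0$ holds for every state, because then $a_0a_2\ge S^2$ identically. It therefore detects nothing.
\end{itemize}
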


\textbf{Proof}  Set $H=[\mathcal{M}^l_{\alpha, \beta}(\rho)\mathcal{M}^l_{\alpha, \beta}(\rho)^{\dag}]^{\frac{1}{2}}$. We introduce two matrix vectors
\begin{eqnarray*}
  \textbf{x}: &=& (x_0, x_1, \cdots, x_{\lfloor \frac{d_Ad_B}{2}\rfloor}) \\
   &=& (I=H^0, H, H^2,  \cdots, H^{\lfloor \frac{d_Ad_B}{2}\rfloor}),
\end{eqnarray*}
and
\begin{eqnarray*}
  \textbf{y}: &=& (y_0, y_1, \cdots, y_{\lfloor \frac{d_Ad_B-1}{2}\rfloor}) \\
   &=& (H^{\frac{1}{2}}, H^{\frac{3}{2}}, \cdots, H^{\lfloor \frac{d_Ad_B-1}{2}\rfloor}).
\end{eqnarray*}
 Using the Hilbert-Schmidt inner product of matrices, we construct two Gram matrices in terms of $\textbf{x}$ and $\textbf{y}$,
$\langle x_i, x_j\rangle={\rm{tr}} H^{i+j}=a_{i+j}$ and $ \langle y_i, y_j\rangle={\rm{tr}} H^{i+\frac{1}{2}}H^{j+\frac{1}{2}}={\rm{tr}} H^{i+j+1}=a_{i+j+1}$, which are just the Hankel matrices $H_k(\textbf{a})$ and $B_l(\textbf{a})$, respectively. As Gram matrices are always positive semidefinite, we have $H_k(\textbf{a})\geq 0$ and $B_l(\textbf{a})\geq 0$ for any state.

Since the first moment $a_1=\|\mathcal{M}^l_{\alpha, \beta}(\rho)\|_{\textrm{tr}} \leq \sqrt{(l\alpha^2+1)(l\beta^2+1)}$ for any separable bipartite $\rho$ \cite{sun2024}, by replacing all $a_1s$ in $H_k(\textbf{a})$ and  $B_l(\textbf{a})$ with $\sqrt{(l\alpha^2+1)(l\beta^2+1)}$, we obtain the inequalities (\ref{bi4}).
 $\Box$

If we choose $a_1=\sqrt{(l\alpha^2+1)(l\beta^2+1)}$, by Theorem \ref{thm2} the Hankel matrices $\widehat{B_1}(\textbf{a})=\begin{pmatrix} a_1 & a_2\\ a_2 & a_3 \end{pmatrix}\geq 0$. Theorem 2 reduces to the inequality given in Theorem \ref{thm1}.

The separable criteria given by Theorem 1 and Theorem 2 can be directly generalized to the case of multipartite states.
It is worth noting that our separability criterion given in Theorem 1 reduces to the criterion proposed by Zhang\textit{ et al.} \cite{zhang2022} in the case of $l=0$. The following examples show the power of our separability criteria in detecting entanglement.

\textbf{Example 1 }
Let us consider the following  bipartite state,
\[
\rho=p|\varphi\rangle\langle \varphi|+\frac{1-p}{4}I,~~ p\in[0,1],
\]
where $|\varphi\rangle=\frac{1}{\sqrt{2}}(|00\rangle+|11\rangle)$ is the maximally entangled state, and $I$ is the $4\times 4$ identity matrix.  Set $f=a_2^2- \sqrt{(l\alpha^2+1)(l\beta^2+1)}a_3$ to be difference between the right and left sides of the inequality in Theorem 1. Fig.1 shows that function $f$  increases with the state parameter $p$, and smaller values of $\alpha$ yield larger $f$. Setting $l=1$ and $\alpha=\beta=\frac{1}{729}$, we obtain by straightforward calculation that
the state $\rho$ is entangled for $p\in [0.44, 1]$.
\begin{figure}[ht]
    \centering
    \includegraphics[width=0.35\textwidth]{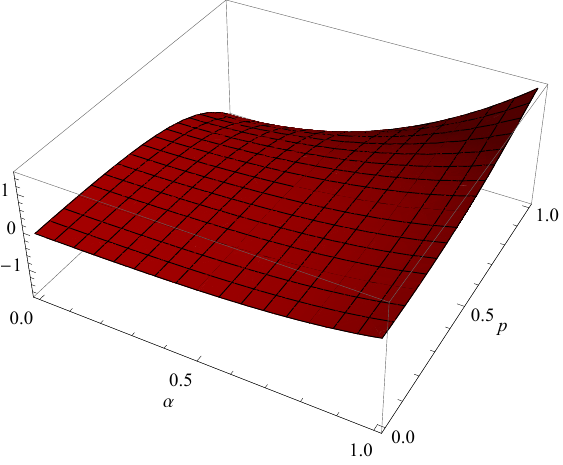}
    \caption{$f=a_2^2- \sqrt{(l\alpha^2+1)(l\beta^2+1)}a_3$ as a function of $p$ and $\alpha$
    with  $l=1$ and $\beta=\alpha$. }
    \label{fig:example}
\end{figure}

The realignment moments based separability criterion  given in \cite{wang2024} says that the state is entangled  within the range of $p\in (0.54, 1]$. While the witnesses based separability criterion given in Ref.\cite{WYD2025} gives rise to that the state is  entangled when $p\in (\sqrt{\frac{1}{3}}, 1]$. It is clear that our criterion given in Theorem 1 detects better the entanglement of this state.  According to the realignment criterion, $\rho$ is entangled in the case $p \in (\frac{1}{3}, 1]$. This implies that, for the sake of experimental operability and convenience, experimentally-based entanglement criteria sacrifice some precision, thus exhibiting a weaker ability to detect entanglement compared to the original theoretically-founded realignment criterion.

\textbf{Example 2}
Consider the Werner states in $\mathcal{H}_d\otimes \mathcal{H}_d$,
\begin{equation}
\rho_W=\frac{1}{d^3-d}[(d-p)I\otimes I+(dp-1)F],
\end{equation}
where $-1\leq p\leq 1$ and $F$ is the flip operator given by $F(\phi\otimes \varphi)=\varphi\otimes \phi$.
These states are separable iff $p\geq 0$ \cite{werner}.

Set $d=2$. It is verified that the generalized realignment matrix $\mathcal{M}_{\alpha, \beta}(\rho)$ has four singular values, three of which are equal to $|2p - 1|/6$. As shown in Fig. \ref{fig:wer2},  smaller values of $\alpha$ and greater values of $l$ lead to stronger entanglement detection by Theorem 1.
The value of $f=a_2^2- \sqrt{(l\alpha^2+1)(l\beta^2+1)}a_3$  is greater than zero when $p \in [-1, -0.163744]$ if we choose $\alpha=1/729$ and $l=3$, indicating that our entanglement criterion can identify the entanglement of Werner state for $p \in [-1, -0.163744]$.
\begin{figure}[ht]
    \centering
    \includegraphics[width=0.45\textwidth]{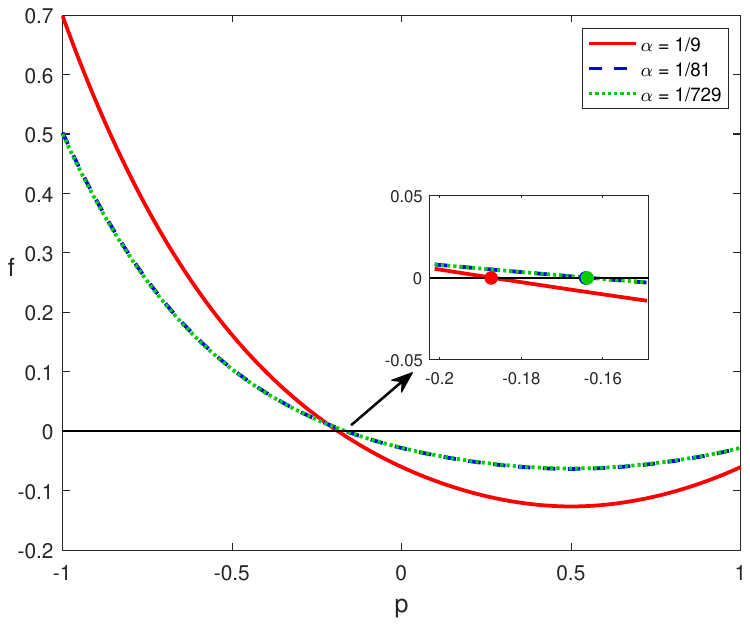}
    \caption{The entanglement of Werner state $\rho_W$ is fully detected by Theorem 1 when $l=3$ and $\beta=3\alpha$. }
    \label{fig:wer2}
\end{figure}

\textbf{Example 3}
Consider the two-qubit isotropic state given in \cite{zhao2010},
\begin{equation}
\rho_b=\frac{1-b}{3}I\otimes I+\frac{4b-1}{3}\ket{\varphi}\bra{\varphi}, 0\leq b\leq 1.
\end{equation}
If we set $l=1$, $\alpha=1/8$ and $\beta=1/16\sqrt{2}$, $\rho_b$ is entangled when $b>0.501550$ by the entanglement criterion proposed in Theorem 2, which is exactly the same result as the one from the realignment and PPT criterion directly, and stronger
than the result $b\geq 0.608594$ given in \cite{agg2024}.

\section{III.Conclusions}
We have investigated the entanglement detection of bipartite quantum states. Based on the moments of the realigned matrix of a density matrix, we have introduced the generalized realignment moments,  and proposed two feasible separability criteria for any dimension bipartite quantum states. The discriminant in these criteria can also be represented in terms of PT moments. Therefore, these criteria
can also be experimentally implemented. The advantage of our separability criteria lies in its parameterization of the realignment matrix, which leads to  more refined and flexible criteria.
At last, detailed examples show that our separability criteria are effective to detect entanglement.

\bigskip
\noindent{\bf Acknowledgments}

This work is supported by the National Natural
Science Foundation of China (NSFC) under Grant nos. (12204137, 12171303, 12171044); the China Scholarship Council (CSC);
the Natural Science Foundation of Hainan Province under Grant No. 125RC744; the Talent and Research Star-up Fund of Hainan Normal University No. HSZK-KYQD-202401; the specific research fund of the Innovation Platform for Academicians of Hainan Province.

\bigskip
\textbf{Date Availability Statement}
All data generated or analysed during this study are available from the corresponding author on reasonable request.

\bigskip
\textbf{Conflict of Interest Statement}
We declare that we have no conflict of interest.

\end{document}